\newtheorem{theorem}{Theorem}
\newtheorem{proposition}{Proposition}
\date{}
\title{Monotone false discovery rate}
\author{Joong-Ho Won%
	\thanks{E-mail: \texttt{wonj@korea.ac.kr}.}}
\affil{Korea University}
\author{Johan Lim}
\author{Donghyeon Yu}
\affil{Seoul National University}
\author{Byung Soo Kim}
\affil{Yonsei University}
\author{Kyunga Kim}
\affil{Sookmyung Womens University}
\begin{document}
\maketitle
\begin{abstract}
\noindent
This paper proposes a procedure to obtain monotone estimates of both the local and the tail false discovery rates that arise in large-scale multiple testing. The proposed monotonization is asymptotically optimal for controlling the false discovery rate and also has many attractive finite-sample properties. 

\noindent{\bf Keywords:}
adaptive decision rule, false discovery rate, empirical Bayes methods, mode matching, isotonic regression.
\end{abstract}


\section{Introduction}

The advance of modern high-throughput technologies in many scientific disciplines such as genomics and brain imaging has dramatically increased both the size and the dimension of the data and made data analysis a major challenge.
In particular, it is often required to test thousands or millions of hypotheses simultaneously when analyzing large-scale, high-dimensional data.
Unlike the case of testing a single hypothesis, type I error in multiple hypothesis testing is not uniquely defined. Traditional approaches, e.g, the family-wise error rate (FWER), are far too conservative and produce many false negatives in high-dimensional settings. 
For this reason, the concept of false discovery rate (FDR), or the expected proportion of false positives among declared positives, is introduced and now widely accepted.

The FDR is originally proposed by \citet{BenjaminiH:1995}, who develop a stepwise procedure to control the FDR. \citet{Storey:2002} proposes to estimate the FDR of a fixed rejection region and introduces the $q$-value, which is the minimum FDR level to reject the null hypothesis given observed data. 
Both the Benjamini-Hochberg precedure and the $q$-value assume independence among the summarizing statistics. Unfortunately the independence assumption rarely holds in practice, hence often discrepancy appears between the theoretical and the observed distributions of the summarizing statistics. For this reason,
Efron has recently introduced an empirical Bayes (EB) procedure based on a two-group mixture model \citep{Efron:2004,Efron:2007a,Efron:2007b}. 
The EB procedure uses the $z$-values instead of the $p$-values and fits them using the two-group mixture model. The EB framework introduces two variants of the FDR: the local FDR, denoted by ``fdr'', is the ratio of the null sub-density to the marginal mixture
density of the two-group model; the tail FDR, denoted by ``Fdr'', is the ratio of the null sub-survival function (tail probability) to the marginal survival function. 
The EB procedure estimates the null and the marginal mixture distributions from the data. Hence it takes into account the dependence among test statistics. The estimated null distribution is referred to as the empirical null.

The main theme of this paper is monotonicity in the FDR. Monotonicity is desirable in many settings as it maintains the order of the observed test statistics. In particular, we show that the monotonicity condition for the local FDR implies the monotone likelihood ratio condition (MLRC) of \citet{SunCai:2007},%
\footnote{%
\citet{SunCai:2007} call the condition ``SMLR,'' without spelling out what it abbreviates. Later they refer to the same condition as ``MLRC,'' while identifying that ``SMLR'' stands for ``symmetric monotone likelihood ratio'' (personal communications with Wenguang Sun, 2013).}
under which the local FDR yields the optimal oracle decision rule. We then show that a monotone estimate of the local FDR results in a data-driven decision rule that is, under some regularity conditions, asymptotically optimal. Furthermore, we prove that a monotone estimate of the local FDR satisfies the MLRC in finite-sample settings, which by itself is desirable in practice.

Despite many attractive features of monotonicity, unfortunately, few existing procedures to estimate fdr or Fdr take monotonicity into account. \citet{Broberg:2005} investigates the use of monotone FDR in the setting that the theoretical null distribution of $p$-values is uniform on $[0,1]$. In this setting, monotonicity of fdr (resp. Fdr) is equivalent to that of the marginal density function (resp. the marginal survival function). Monotonicity is enforced by estimating the marginal density function (resp. the marginal survival function) under appropriate constraints, either parametrically or non-parametrically. A similar procedure is employed by \citet{Strimmer:2008}.
For more flexible EB procedures \citep{Efron:2007a,Efron:2007b}, however, one has to estimate both the null and the marginal distributions. We undertake to see how to impose monotonicity in this setting.

We begin with a review of the empirical Bayes theory of false discovery rate. In Section \ref{sec:monotone}, attractive statistical properties of the monotone FDR are discussed. We show that monotonicity in the local FDR is equivalent to that in the likelihood ratio of the components of the two-group mixture model, and implies that of the tail FDR. After proving the claims made above, we propose a procedure that ensures monotonicity in the estimates of the local and the tail FDRs, and that naturally leads to an adaptive decision rule using 
the monotonized estimates. In Section \ref{sec:numerical}, we conduct a numerical study to demonstrate that the monotonized FDR can improve the performance of  the FDR estimates. In Section \ref{sec:examples}, we illustrate that the proposed procedure can improve real-world data analyses. Section \ref{sec:conclusion} concludes the paper.

\section{Empirical Bayes Theory of False Discovery Rates }\label{sec:EB}

This section reviews the empirical Bayes theory of false discovery rate inference, largely developed 
by  Efron \citep{Efron:2004,Efron:2007a,Efron:2007b}.

Suppose we have a collection of $N$ hypotheses and their corresponding ``summarizing statistics'' $T_1,\ldots,T_N$. 
Assume that
the $T_i$s have a common marginal distribution whose density is of the two-group mixture form:
\begin{equation} \label{eqn:model}
f(t) = p_0 f_0(t) + p_1 f_1 (t),
\end{equation}
where $f_0 (t)$ and $f_1(t)$ are the null and the non-null
densities, respectively; $p_0$ is the proportion of the null group, and $p_1=1-p_0$. 
We define the null sub-density as  $p_0 f_0(t)$. 
The local false discovery rate (denoted by fdr) and the right tail FDR (denoted by Fdr) at
$t$ are, respectively, defined as
\begin{equation} \label{eqn:fdr_definition}
{\rm fdr}(t) = \frac{p_0 f_0(t)}{f(t)} \quad \mbox{and } \quad
{\rm Fdr}(t)=\frac{p_0 S_0(t)}{p_0 S_0(t) + p_1 S_1(t)},
\end{equation}
where $S_0(t)$ and $S_1(t)$ are the survival functions of the null and the non-null groups, respectively. Note that the tail FDR corresponds to one-sided hypotheses toward the positive side, and the other direction or the left tail counterpart can be similarly defined.

Knowledge of the null density $f_0(t)$ plays a crucial role in the inference regarding
fdr and Fdr. The null distribution of the test statistics for single hypothesis testing is often known theoretically, e.g., standard normal, Student's $t$, or chi-square.
However, in multiple hypothesis testing, the observed test statistics often do not follow the theoretical null distribution. This phenomenon may be due to failed assumptions, unobserved covariates, correlations among the samples or among the test statistics \citep{Efron:2007b}.

To remedy this problem, several authors advocate a family of empirical Bayes procedures, referred to as the empirical null method \citep{Efron:2007a,Efron:2007b,Schwartzman:2008}. 
This method estimates the null distribution from the data itself. For $N$ sufficiently large, the components of the mixture density \eqref{eqn:model} can be estimated under a certain set of assumptions. These assumptions include that $f_0(t)$ is unimodal, and that the most of the probability mass around the peak of $f(t)$ is due to the null sub-density $p_0 f_0(t)$.
Therefore, a reliable estimation of $f_0(t)$ and $p_0$ is very important for accurate inference of the FDRs discussed above.

To estimate  $f(t)$, $f_0(t)$, and $p_0$, \citet{Efron:2007b} proposes 
two methods, named  ``central matching" and ``MLE fitting". First, central matching is a two-step procedure. At step 1, the mixture density $f(t)$ is modeled as a semi-parametric exponential family, e.g.,
$
f(t) = c_{\beta} \exp \big\{ \sum_{j=1}^7 \beta_j t^j
\big\},
$
where $c_{\beta}$ is a normalization constant.
Subsequently the $N$ test statistics are binned into $K$ bins with equal width $\Delta$ centered at $t_1,t_2,\ldots,t_K $.
Let $y_k$ be the count in bin $k$. Then the parameters $\{\beta_j\}$ are fitted to $\{y_k\}$ using Lindsey's method \citep{Lindsey:1974}. At step 2, $f_0(t)$ is fit to the estimated $f(t)$ around $t=0$. Assuming $f_0(t)$ is a normal density, the parameters (mean and variance) for $f_0(t)$ are estimated by least squares. 
Second, MLE fitting  undertakes  maximum likelihood estimation, in which 
it is assumed that the non-null density is only supported outside some known interval $[t_{\min}, t_{\max}]$, i.e., $f_1(t)=0$ for $t \in [t_{\min}, t_{\max}]$, and the null density is  normal with unknown mean and variance.
The likelihood function of the $N$ test statistics is a product of a binomial and a truncated normal likelihoods. Then the parameters, i.e., $p_0$ and the mean and the variance of the null, are estimated by maximizing the product likelihood.

Central matching has been further generalized with general exponential families by \citet{Schwartzman:2008} (``mode matching'').  Assuming that the null density is taken from an exponential family
$
f_0(t)=a_0(t) \exp ( \mathbf{x}(t)^T \mathbf{\eta} - \psi(\mathbf{\eta}) ),
$
and the counts ${y_k}$ within $[t_{\min}, t_{\max}]$ are independent Poisson variables with mean $\lambda_k \approx N \Delta p_0 f_0(t_k)$,
the following Poisson regression model is obtained:
\begin{equation}\label{eqn:poisson}
\log( {\mathbf \lambda}) = \bf{X} \bf{\eta}^{+} + \bf{h},
\end{equation}
where $\mathbf{\lambda} = (\lambda_1, \ldots, \lambda_K)^T$; $\mathbf{\eta}^{+} = (C, \mathbf{\eta})^T$ with $C=\log p_0 - \psi(\mathbf{\eta})$; $\mathbf{X}$ is the design matrix with rows $(1, \mathbf{x}(t_k)^T)$, $k=1, \ldots, K$; and $\mathbf{h}=(h_1, \ldots, h_K)$ is a known offset vector with $h_k = \log(N\Delta a_0(t_k))$. Solving \eqref{eqn:poisson} provides an estimate vector $(\hat{C}, \hat{\mathbf{\eta}})^T$, from which the proportion of the null group  $\hat{p_0} = \exp( \hat{C} + \psi(\hat{\mathbf{\eta}}) )$ is reconstructed.
Then the estimates of the fdr and the Fdr at the bin centers $\{t_k\}$  are evaluated by
\begin{equation} \nonumber
\widehat{\rm fdr}(t_k) = \frac{p_0
\widehat{f}_0(t_k)}{\widehat{f}(t_k)}= \frac{\widehat{y}_k}{y_k} \quad \mbox{and} \quad
\widehat{\rm Fdr} (t_k) = \frac{ (1/2) \widehat{y}_k +
\sum_{j=k+1}^K \widehat{y}_j} { (1/2) y_k + \sum_{j=k+1}^k y_j},
\end{equation}
where $(\hat{y_1}, \ldots, \hat{y_K})^T = \hat{\mathbf{y}} = \exp( \mathbf{X}\hat{\mathbf{\eta}}^{+} + \mathbf{h} )$
is the vector of the expected frequencies of the bins.
Equivalently, in vector form, they are written as
\begin{equation}\label{eqn:logfdr}
\log \widehat{\bf fdr} = \log \widehat{\bf y}- \log {\bf y}
\quad \mbox{and } \log \widehat{\bf Fdr} = \log \big( {\bf S}
\widehat{\bf y} \big) - \log \big( {\bf S} {\bf y} \big),
\end{equation}
where $\mathbf{S}$ is an upper triangular matrix with entries $1/2$
on the diagonal and $1$ above the diagonal.

\section{Monotone False Discovery Rate}\label{sec:monotone}
In this section, we first examine attractive features of the monotone FDR (local FDR and tail FDR). We then 
propose a procedure to monotonize the estimates of $\mathrm{fdr}(t)$ and $\mathrm{Fdr}(t)$, and adaptive 
optimal procedure using the monotonized estimates. 
In the remainder of the section, we assume that $\mathrm{fdr}(t)$ is monotonically decreasing.

\subsection{Properties of monotone FDR}\label{sec:monotone:props}

\paragraph{Monotone local FDR is equivalent to monotone likelihood ratio} 
 Recall that
\begin{equation} \nonumber
{\rm fdr} (t) = \frac{ p_0 f_0(t)}{p_0 f_0(t) + p_1 f_1 (t)} =
\frac{p_0}{p_0 + p_1 f_1(t) \big/ f_0(t)}.
\end{equation}
This shows that monotone decrease of ${\rm fdr}(t)$ is equivalent to monotone increase of
the likelihood ratio $f_1(t) \big/ f_0(t)$. 
This equivalence in turn defines a stochastic ordering between the null and the alternative densities: the alternative density $f_1$ is said to be stochastically larger than the null density $f_0$ if the likelihood ratio is monotonically increasing \citep{RobertsonWD:1988, Lim:2012}. A similar statement can be made for the tail FDR. 

\paragraph{Monotone local FDR and the MLRC} For a random variable $T$ that has the identical distribution to the common marginal distribution of $T_1, \ldots, T_N$,
write the marginal density of $\Phi=\mathrm{fdr}(T)$ as $p_0 g_0(\phi) + p_1 g_1(\phi)$. Here $g_0$ and $g_1$ are the conditional densities of $\Phi$ under the null and the non-null, respectively. Since $\mathrm{fdr}(t)$ is monotone decreasing in $t$, $\mathrm{fdr}^{-1}(\cdot)$ is well-defined and 
\begin{align}\label{eq:smlr}
\frac{g_1(\phi)}{g_0(\phi)} = \frac{f_1(\mathrm{fdr}^{-1}(\phi))/(\mathrm{fdr}^{-1}(\phi))'}{f_0(\mathrm{fdr}^{-1}(\phi))/(\mathrm{fdr}^{-1}(\phi))'} = \frac{f_1(\mathrm{fdr}^{-1}(\phi))}{f_0(\mathrm{fdr}^{-1}(\phi))}
\end{align}
is decreasing in $\phi$.
Hence the oracle statistic $\mathrm{fdr}(T)$ has monotone likelihood ratio. This is precisely the MLRC of \citet{SunCai:2007}.
Note that if a statistic $\mathcal{T}(T)$ satisfies the MLRC, the decision rule $I\{\mathcal{T}(T) < c\}$ has many attractive features for multiple testing problems:
\begin{proposition}\label{prop:smlr}
\citep[Proposition 1]{SunCai:2007}. When $N$ summarizing statistics $T_1, T_2, \ldots, T_N$ follow the two-group mixture model \eqref{eqn:model}, if a statistic $\mathcal{T}(T_i)$ satisfies the MLRC, then applying the decision rule $I\{\mathcal{T}(T_i) < c\}$ for $i=1, \ldots, N$ implies
    \begin{enumerate}
    \item $\mathbf{Pr}(\text{non-null } | \mathcal{T}(T_i) < c)$ is monotonically decreasing in threshold $c$,
    \item mFDR is monotonically increasing in $c$ and the expected number of rejections $r$, and
    \item  mFNR is monotonically decreasing in $c$, $r$, and mFDR,
    \end{enumerate}
where $\rm{mFDR}$ is the marginal false discovery rate, or $\mathbf{Pr}(\mathcal{T}(T_i) < c,\text{null })/\mathbf{Pr}(\mathcal{T}(T_i)<c)$, and $\rm{mFNR}$ is the marginal false non-discovery rate, or $\mathbf{Pr}(\mathcal{T}(T_i)>c,\text{non-null })/\mathbf{Pr}(\mathcal{T}(T_i)>c)$.
\end{proposition}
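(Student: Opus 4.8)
The plan is to reduce all three assertions to a single monotonicity property of the likelihood ratio together with one elementary averaging inequality. First I would fix notation: write $S=\mathcal{T}(T)$, let $g_0$ and $g_1$ denote the conditional densities of $S$ under the null and the non-null, and set $G_j(c)=\int_{-\infty}^{c} g_j$ and $\bar G_j(c)=\int_{c}^{\infty} g_j$ for $j=0,1$. The MLRC says precisely that $r(s):=g_1(s)/g_0(s)$ is decreasing in $s$. With this notation, Bayes' rule gives
\[
\mathrm{mFDR}(c)=\mathbf{Pr}(\text{null}\mid S<c)=\frac{p_0 G_0(c)}{p_0 G_0(c)+p_1 G_1(c)},\qquad \mathbf{Pr}(\text{non-null}\mid S<c)=1-\mathrm{mFDR}(c),
\]
so assertion (1) and the threshold part of assertion (2) are literally the same statement, and
\[
\mathrm{mFNR}(c)=\frac{p_1 \bar G_1(c)}{p_0 \bar G_0(c)+p_1 \bar G_1(c)}.
\]

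The technical core is an averaging inequality. Since $r$ is decreasing, $r(s)\ge r(c)$ for all $s\le c$ and $r(s)\le r(c)$ for all $s\ge c$. Viewing $G_1(c)/G_0(c)=\big(\int_{-\infty}^{c} r\,g_0\big)\big/\big(\int_{-\infty}^{c} g_0\big)$ as a $g_0$-weighted average of $r$ over $(-\infty,c]$ then yields $G_1(c)/G_0(c)\ge r(c)$, while the analogous average over $[c,\infty)$ yields $\bar G_1(c)/\bar G_0(c)\le r(c)$. Equivalently, $g_1(c)G_0(c)\le g_0(c)G_1(c)$ and $g_0(c)\bar G_1(c)\le g_1(c)\bar G_0(c)$. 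The same averaging argument, applied as $c$ varies, shows more directly that $c\mapsto G_1(c)/G_0(c)$ and $c\mapsto \bar G_1(c)/\bar G_0(c)$ are both decreasing.

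From here the monotonicity claims follow. Differentiating $\mathrm{mFDR}(c)$ produces a numerator proportional to $g_0(c)G_1(c)-G_0(c)g_1(c)\ge 0$, so $\mathrm{mFDR}$ increases in $c$ and hence $\mathbf{Pr}(\text{non-null}\mid S<c)=1-\mathrm{mFDR}(c)$ decreases in $c$, giving assertion (1). Differentiating $\mathrm{mFNR}(c)$ gives a numerator proportional to $g_0(c)\bar G_1(c)-g_1(c)\bar G_0(c)\le 0$, so $\mathrm{mFNR}$ decreases in $c$, the threshold part of assertion (3). To convert monotonicity in $c$ into monotonicity in $r$ and in $\mathrm{mFDR}$, I would use that the expected number of rejections $r=N\,\mathbf{Pr}(S<c)=N\{p_0 G_0(c)+p_1 G_1(c)\}$ is strictly increasing in $c$, so $c$ and $r$ stand in strictly increasing correspondence; composing, $\mathrm{mFDR}$ increases in $r$ and $\mathrm{mFNR}$ decreases in $r$. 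Finally, since $\mathrm{mFDR}$ increases in $c$ while $\mathrm{mFNR}$ decreases in $c$, inverting the former exhibits $\mathrm{mFNR}$ as a decreasing function of $\mathrm{mFDR}$, completing assertions (2) and (3).

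The main obstacle is the averaging inequality, which is the only place the MLRC is genuinely used; the rest is bookkeeping through the chain rule. I would be careful about two regularity points: the strictly increasing correspondence between $c$ and $r$ requires the marginal density of $S$ to be positive on the range of interest (otherwise one gets weak monotonicity and should argue on the support), and the differentiation step presumes $g_0,g_1$ are smooth enough. To keep the hypotheses minimal, I would likely bypass differentiation altogether: since $G_1/G_0$ and $\bar G_1/\bar G_0$ are decreasing by the averaging argument, monotonicity of $\mathrm{mFDR}=1/\{1+(p_1/p_0)(G_1/G_0)\}$ and of $\mathrm{mFNR}=1/\{1+(p_0/p_1)(\bar G_0/\bar G_1)\}$ can be read off directly, requiring only that the ratios be monotone rather than differentiable.
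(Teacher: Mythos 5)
Your argument is correct. Note, however, that the paper itself offers no proof of this proposition: it is imported verbatim from Sun and Cai (2007, Proposition 1) and used as a black box, so there is no in-paper derivation to compare against. What you have written is a clean, self-contained reconstruction along the standard lines: reduce everything to the monotonicity of the truncated-average likelihood ratios $G_1(c)/G_0(c)$ and $\bar G_1(c)/\bar G_0(c)$, which follows from the MLRC via the weighted-average (mediant) inequality, and then read off the three assertions from the algebraic forms $\mathrm{mFDR}=1/\{1+(p_1/p_0)(G_1/G_0)\}$ and $\mathrm{mFNR}=1/\{1+(p_0/p_1)(\bar G_0/\bar G_1)\}$. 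Your observation that assertion (1) is the complement of the threshold part of assertion (2), and your decision to avoid differentiation in favor of the direct averaging argument, are both sound and keep the regularity hypotheses minimal; the only point worth spelling out more fully is the one-line claim that the averaging argument "applied as $c$ varies" gives monotonicity of $G_1/G_0$ and $\bar G_1/\bar G_0$ --- this needs the mediant inequality comparing the ratio on $(-\infty,c_1]$ with the ratio of the increment over $[c_1,c_2]$, which you clearly have in mind but should state. The caveat about strict monotonicity of $c\mapsto r$ requiring positive marginal density is also appropriate.
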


\paragraph{Optimality of the monotone local FDR}
Not only that it has many good properties as a statistic for multiple testing, the monotone local FDR is optimal:
\begin{theorem}
If the local FDR in \eqref{eqn:fdr_definition} is monotonically decreasing, then for any given mFDR level $\alpha$ in a multiple testing problem on the summarizing statistics $T_1, T_2, \ldots, T_N$, there exists a unique $c(\alpha)$ such that the decision rule $I\{\mathrm{fdr}(T_i) < c(\alpha) \}$ has an mFDR not greater than $\alpha$ and the smallest mFNR among all decision rules of the form $I\{\mathcal{T}(T_i) < c\}$, where $\mathcal{T}$ satisfies the MLRC and $c$ can be any constant.
\end{theorem}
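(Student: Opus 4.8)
The plan is to reduce the statement to the Neyman--Pearson lemma together with the monotonicity properties recorded in Proposition \ref{prop:smlr}. First I would establish existence and uniqueness of $c(\alpha)$. Since $\mathrm{fdr}(t)=p_0f_0(t)/f(t)$ is the posterior null probability, the rule $I\{\mathrm{fdr}(T)<c\}$ obeys the self-consistency identity $\mathrm{mFDR}(c)=\mathbf{E}[\mathrm{fdr}(T)\mid \mathrm{fdr}(T)<c]$. Because enlarging the conditioning set only adds points whose $\mathrm{fdr}$ value $c$ exceeds the current conditional average, $\mathrm{mFDR}(c)$ is continuous and strictly increasing in $c$, running up to $\mathbf{E}[\mathrm{fdr}(T)]=p_0$; hence there is a unique $c(\alpha)$ with $\mathrm{mFDR}(c(\alpha))=\alpha$ (for $\alpha$ in the attainable range). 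Moreover, \eqref{eq:smlr} already shows that $\mathrm{fdr}(T)$ satisfies the MLRC, so Proposition \ref{prop:smlr} applies to the family $\{I\{\mathrm{fdr}(T)<c\}\}_c$: along it $\mathrm{mFNR}$ is decreasing in $\mathrm{mFDR}$.

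Next I would observe that a rejection set $\{\mathrm{fdr}(t)<c\}$ is exactly a likelihood-ratio set, since $\mathrm{fdr}(t)<c \iff f_1(t)/f_0(t) > p_0(1-c)/(p_1 c)$. Writing $t_0(R)=\int_R f_0$ and $t_1(R)=\int_R f_1$ for a rejection region $R$, one has $\mathrm{mFDR}(R)=p_0t_0/(p_0t_0+p_1t_1)$ and $\mathrm{mFNR}(R)=p_1(1-t_1)/[p_0(1-t_0)+p_1(1-t_1)]$, and the oracle region $R^{\ast}=\{\mathrm{fdr}<c(\alpha)\}$ is the Neyman--Pearson most powerful set of its size $t_0(R^{\ast})$.

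The core comparison is then a two-step argument. Let $R'=\{\mathcal{T}(T)<c'\}$ be any competing MLRC rule with $\mathrm{mFDR}(R')\le\alpha$. First, let $A$ be the $\mathrm{fdr}$-region with the same size $t_0(A)=t_0(R')$, which is attainable because $t_0(\cdot)$ ranges continuously over $[0,1]$ when $\mathrm{fdr}$ is continuous and strictly monotone. By Neyman--Pearson, $t_1(A)\ge t_1(R')$; at fixed $t_0$ the quantity $\mathrm{mFNR}$ is increasing in the false-negative mass $p_1(1-t_1)$, so $\mathrm{mFNR}(A)\le\mathrm{mFNR}(R')$, while the larger $t_1$ also gives $\mathrm{mFDR}(A)\le\mathrm{mFDR}(R')\le\alpha$. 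Second, $A$ and $R^{\ast}$ both lie in the $\mathrm{fdr}$-thresholding family with $\mathrm{mFDR}(A)\le\alpha=\mathrm{mFDR}(R^{\ast})$, so the monotonicity of Proposition \ref{prop:smlr} for $\mathrm{fdr}(T)$ yields $\mathrm{mFNR}(R^{\ast})\le\mathrm{mFNR}(A)$. Chaining the two inequalities gives $\mathrm{mFNR}(R^{\ast})\le\mathrm{mFNR}(R')$, which is the assertion.

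The main obstacle is the step connecting Neyman--Pearson power optimality to the $\mathrm{mFDR}$/$\mathrm{mFNR}$ objective: two regions with equal $\mathrm{mFDR}$ need not have equal size $t_0$, so one cannot invoke Neyman--Pearson directly at a fixed $\mathrm{mFDR}$ level. The device of first matching the size $t_0$ (where Neyman--Pearson applies and where, crucially, the intermediate region $A$ inherits $\mathrm{mFDR}\le\alpha$) and only then sliding along the $\mathrm{fdr}$-family via Proposition \ref{prop:smlr} to reach $\mathrm{mFDR}=\alpha$ is what makes the two monotonicity directions line up. The remaining work is purely in regularity --- continuity and strict monotonicity of $\mathrm{fdr}$ guaranteeing that every size $t_0$ is attained and that $c(\alpha)$ is unique.
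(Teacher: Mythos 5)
Your argument is correct, but it takes a different (and more self-contained) route than the paper, whose entire proof is a one-line deferral to Theorems 1 and 2 of \citet{SunCai:2007} combined with the observation, already made in Section \ref{sec:monotone:props}, that a monotone local FDR satisfies the MLRC. In effect you have reconstructed the content of those two cited theorems from scratch: your existence/uniqueness step (continuity and strict monotonicity of $\mathrm{mFDR}(c)=\mathbf{E}[\mathrm{fdr}(T)\mid \mathrm{fdr}(T)<c]$, increasing up to $p_0$) is the substance of their Theorem 1, and your two-step Neyman--Pearson comparison --- first matching the null mass $t_0$ so that the likelihood-ratio optimality of the $\mathrm{fdr}$-region gives both $\mathrm{mFNR}(A)\le\mathrm{mFNR}(R')$ and $\mathrm{mFDR}(A)\le\alpha$, then sliding along the $\mathrm{fdr}$-thresholding family using the $\mathrm{mFNR}$-vs-$\mathrm{mFDR}$ monotonicity of Proposition \ref{prop:smlr} --- is a clean substitute for their Theorem 2, which they prove via a connection to a weighted classification risk. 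Your device of matching $t_0$ rather than $\mathrm{mFDR}$ correctly addresses the real obstruction (Neyman--Pearson fixes size, not $\mathrm{mFDR}$), and as a bonus your comparison step never uses the MLRC of the competitor $\mathcal{T}$, so it actually yields optimality over all fixed rejection regions, which is stronger than the stated claim. The only caveats are regularity ones you already flag: continuity and strict positivity of $f_0,f_1$ (assumed elsewhere in the paper, e.g.\ in Proposition 2) are needed so that every null size $t_0$ is attained by some $\mathrm{fdr}$-region and so that $c(\alpha)$ is unique; and the uniqueness claim should be read as holding for $\alpha$ in the attainable range $(0,p_0)$.
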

\begin{proof}
This can be easily proved by using Theorems 1 and 2 of \citet{SunCai:2007}, and the above result that the MLRC holds for the monotone local FDR.
\end{proof}

\paragraph{Monotonicity of the local, tail, and marginal FDRs}

If $\mathrm{fdr}(t)$ is monotonically decreasing, then the mFDR of the decision rule $I\{\mathrm{fdr}(T)< c\}$ is written as
\[
\frac{\mathbf{Pr}(\mathrm{fdr}(T)<c,~\text{null})}{\mathbf{Pr}(\mathrm{fdr}(T)<c)} = \frac{p_0 S_0(\mathrm{fdr}^{-1}(c))}{p_0 S_0(\mathrm{fdr}^{-1}(c)) + p_1 S_1(\mathrm{fdr}^{-1}(c)) } = \mathrm{Fdr}(\mathrm{fdr}^{-1}(c)).
\]
By Proposition \ref{prop:smlr}, mFDR is monotonically increasing in $c$. Hence $\mathrm{Fdr}(t)$ is monotonically decreasing in $t=\mathrm{fdr}^{-1}(c)$. Furthermore, the tail FDR can be controlled by controlling the local FDR:
\begin{proposition}
Assume $f_0(t)$ and $f_1(t)$ are continuous and positive for every $t$.
If $\mathrm{fdr}(t)$ is monotonically decreasing in $t \in \mathbb{R}_+$, then for every $\alpha \in(0,1)$, we have
\[
\big\{t: \mathrm{fdr}(t)\le \alpha \big\} \subset \big\{t: \mathrm{Fdr}(t) \le \alpha \big\}.
\]
\end{proposition}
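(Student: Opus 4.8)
The plan is to establish the stronger pointwise inequality $\mathrm{Fdr}(t) \le \mathrm{fdr}(t)$ for every $t \in \mathbb{R}_+$, from which the stated set inclusion is immediate: if $\mathrm{fdr}(t) \le \alpha$, then $\mathrm{Fdr}(t) \le \mathrm{fdr}(t) \le \alpha$, so the left-hand set is contained in the right-hand set.

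First I would rewrite the numerator and denominator of $\mathrm{Fdr}(t)$ as tail integrals against the marginal density. Since the definition of the local FDR gives $p_0 f_0(s) = \mathrm{fdr}(s)\, f(s)$, and since $f = p_0 f_0 + p_1 f_1$, integrating over $[t,\infty)$ yields
\[
p_0 S_0(t) = \int_t^\infty \mathrm{fdr}(s)\, f(s)\, ds, \qquad p_0 S_0(t) + p_1 S_1(t) = \int_t^\infty f(s)\, ds =: S(t).
\]
Dividing, I obtain the representation
\[
\mathrm{Fdr}(t) = \frac{\int_t^\infty \mathrm{fdr}(s)\, f(s)\, ds}{\int_t^\infty f(s)\, ds},
\]
which exhibits $\mathrm{Fdr}(t)$ as the average of $\mathrm{fdr}(s)$ over the tail $s \ge t$, weighted by the marginal density $f(s)$.

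The second step is to invoke the monotonicity hypothesis. Because $\mathrm{fdr}$ is decreasing, $\mathrm{fdr}(s) \le \mathrm{fdr}(t)$ for all $s \ge t$; therefore the density-weighted tail average cannot exceed the value $\mathrm{fdr}(t)$. This gives $\mathrm{Fdr}(t) \le \mathrm{fdr}(t)$ and closes the argument.

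The remaining work is analytic bookkeeping rather than a genuine obstacle. The positivity and continuity of $f_0$ and $f_1$ guarantee that $S(t) = \int_t^\infty f(s)\, ds > 0$, so $\mathrm{Fdr}(t)$ is well defined and the division is legitimate; and since $0 \le \mathrm{fdr}(s) \le 1$ everywhere (because $p_0 f_0 \le f$), the weighted integral converges and the averaging inequality is valid. The only conceptual point worth flagging is the reinterpretation of $\mathrm{Fdr}$ as a density-weighted tail average of $\mathrm{fdr}$; once that representation is in hand, the monotonicity assumption does all of the work.
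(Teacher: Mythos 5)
Your proof is correct, and it takes a cleaner route than the paper's. The paper fixes the threshold $t_\alpha$ solving $\mathrm{fdr}(t_\alpha)=\alpha$, uses the equivalence between decreasing $\mathrm{fdr}$ and increasing likelihood ratio $f_1/f_0$ to bound $S_1(t_\alpha)/S_0(t_\alpha)$ from below by $f_1(t_\alpha)/f_0(t_\alpha)$, and concludes $\mathrm{Fdr}(t_\alpha)\le\alpha$ at that single point (the full set inclusion then rests on the monotonicity of $\mathrm{Fdr}$ established just before the proposition). You instead prove the stronger pointwise domination $\mathrm{Fdr}(t)\le\mathrm{fdr}(t)$ by exhibiting $\mathrm{Fdr}(t)$ as the $f$-weighted tail average of $\mathrm{fdr}$ --- equivalently, $\mathrm{Fdr}(t)=\mathbf{E}[\mathrm{fdr}(T)\mid T\ge t]$ --- and then applying the decreasing hypothesis under the integral. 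At bottom both arguments integrate a monotonicity inequality over the tail (the paper integrates $f_1(s)/f_0(s)\ge f_1(t_\alpha)/f_0(t_\alpha)$ against $f_0$; you integrate $\mathrm{fdr}(s)\le\mathrm{fdr}(t)$ against $f$), but your version buys something real: it needs no root $t_\alpha$ to exist, no appeal to the separately proved monotonicity of $\mathrm{Fdr}$, and it delivers the sharper and independently useful conclusion $\mathrm{Fdr}\le\mathrm{fdr}$ everywhere, of which the stated inclusion is an immediate corollary.
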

\begin{proof}
Let $t$ and $\alpha$ be arbitrary numbers between 0 and 1, and let $t_{\alpha}$ be the unique root of the equation $\mathrm{fdr}(t)=\alpha$. Then, for $t \ge t_{\alpha}$, 
$
{f_1(t)}\big/{f_0(t)} \le {f_1(t_{\alpha})}\big/{f_0(t_{\alpha})} = ( {p_0}/{\alpha}-p_0 )\big/{p_1}
$.
Now the definition of ${\rm fdr}(t)$ and a simple algebra show that  
\begin{align*} 
\frac{S_1 (t_{\alpha})}{S_0(t_{\alpha})} &=  \frac{1}{S_0 (t_{\alpha})} \int_{t_{\alpha}}^{\infty} f_0(s)  \big( f_1(s) \big/ f_0(s) \big) ds \\
&\ge \frac{1}{S_0(t_{\alpha})}  \int_{t_{\alpha}}^{\infty} f_0(s) ds \cdot \bigg( \frac{p_0}{\alpha}-p_0 \bigg) \frac{1}{p_1} = \bigg( \frac{p_0}{\alpha} - p_0 \bigg) \frac{1}{p_1},
\end{align*}
which tells 
\begin{equation} \nonumber
{\rm Fdr} (t_{\alpha}) = \frac{p_0}{p_0 + p_1 S_1(t_{\alpha}) \big/ S_0 (t_{\alpha}) } \le \alpha.
\end{equation}
\end{proof}

\paragraph{Estimated local FDR statistic}
As a final note, recall that $\mathrm{fdr}(T)$, is not a \textit{bona fide} statistic unless $f_0$, $f_1$, and $p_0$ are known \textit{a priori}. As a solely data-driven, hence \textit{bona fide}, statistic, we may consider an estimator $\widehat{\mathrm{fdr}}(T)$ of $\mathrm{fdr}(T)$, using the methods in Section \ref{sec:EB}. However, the resulting finite-sample estimator of the local FDR is not necessarily monotone, even if the true local FDR is. Hence it is desirable to incorporate monotonicity in the estimation procedure. 

Post-hoc monotonization of the local FDR estimates in the next section is attractive in the following sense. If the true local FDR is monotone,  then the monotonized local FDR satisfies the MLRC and yields a decision rule that enjoys the good properties listed in the beginning of the section. This is readily seen by plugging in a monotone estimator $\widehat{\mathrm{fdr}}(\cdot)$ in place of the true (monotone) $\mathrm{fdr}(\cdot)$ in \eqref{eq:smlr}. Furthermore, if the unadjusted estimate is consistent, then monotonization preserves consistency while reducing variance.

\subsection{Estimation by monotonization}\label{sec:monotone:estimation}

We propose to modify the FDR estimates by imposing a monotone ordering (``isotonization'') among them. Suppose the mode matching method by \citet{Schwartzman:2008} is employed to estimate the local FDR and the tail FDR. Using the delta method, the  variance-covariance matrices of  
$\log \widehat{\bf fdr}$ and $\log \widehat{\bf Fdr}$ are computed as follows. 
Let ${\bf X}$ be the design matrix in \eqref{eqn:poisson},
and ${\bf W}$ be the diagonal matrix made of the vector ${\bf
w}=\big(w_1,w_2,\ldots,w_K \big)$, where $w_k$ is equal to 1 or 0
according to whether $t_k$ is in the null region $[t_{\min}, t_{\max}]$ for the Poisson
regression.  Set $\widehat{\bf V}= {\rm diag} \big(\widehat{\bf y} \big)$,  
$\widehat{\bf V}_{N}=\widehat{\bf V}-\widehat{\bf y} \widehat{\bf y}^{T} \big/ N$, and
${\bf D}_y = {\bf X} \big( {\bf X}^{T} {\bf W} \widehat{\bf V} {\bf X} \big)^{-1} {\bf X}^{T} {\bf W}$.
Then the desired variance-covariance matrices are given as
\begin{equation} \nonumber
\widehat{\rm cov} \big( \log \widehat{\bf fdr} \big) = {\bf A}
\widehat{\bf V}_{\rm N} {\bf A}^{T}
\quad\mbox{and}\quad
\widehat{\rm cov} \big( \log \widehat{\bf Fdr}\big) = {\bf B}
\widehat{\bf V}_{\rm N} {\bf B}^{T},
\end{equation}
where ${\bf A} = {\bf D}_y - {\bf V}^{-1}$ and $ 
{\bf B}= \widehat{\bf U}^{-1} {\bf S}
\widehat{\bf V}^{-1} {\bf D}_y - {\bf U}^{-1}$
with ${\bf U}={\rm diag} \big( {\bf S} {\bf y} \big)$ and
$\widehat{\bf U} = {\rm diag} \big({\bf S} \widehat{\bf y} \big)$ \citep{Schwartzman:2008}.

We proceed to adjust the initial estimate $\widehat{\bf fdr}$ by solving the quadratic prog):
\begin{equation} \label{eqn:fdr-opt1}
\begin{array}{ll}
\mbox{minimize} & \big( {\bf z} - \log \widehat{\bf fdr}
\big)^{T} \widehat{\rm cov} \big( \log \widehat{\bf fdr}
\big)^{-1} \big( {\bf z} - \log \widehat{\bf fdr} \big) \\
\mbox{subject to} & z_1 \le z_2 \le \cdots \le z_K
\end{array}
\end{equation}
for $\mathbf{z} = (z_1, \ldots, z_K)^T$. 
This QP is a convex optimization problem that can be efficiently solved using existing software packages, e.g., \texttt{quadprog} R package.

If $K$ is large, we suggest to solve a simplified version of \eqref{eqn:fdr-opt1}:
\begin{equation} \label{eqn:fdr-opt2}
\begin{array}{ll}
\mbox{minimize} & \big( {\bf z} - \log \widehat{\bf fdr}
\big)^{T} {\rm diag} \Big\{ \widehat{\rm cov} \big( \log
\widehat{\bf fdr}
\big)\Big\}^{-1} \big( {\bf z} - \log \widehat{\bf fdr} \big) \\
\mbox{subject to} & z_1 \le z_2 \le \cdots \le z_K.
\end{array}
\end{equation}
This is a generalized isotonic
regression problem and can be solved using the pool-adjacent-violator (PAVA) algorithm of  \citet{RobertsonWD:1988}. A similar procedure can be applied to monotonize $\widehat{\bf Fdr}$, estimates of Fdr.

\subsection{Adaptive Decision Rule with Monotonized Estimate}\label{sec:monotone:decisionrule}

Suppose we have obtained monotonized estimates $\widehat{\bf{fdr}}^{\mathrm{iso}}(t_k)$ and $\widehat{\bf{Fdr}}^{\mathrm{iso}}(t_k)$ of the local and the tail FDRs  at $t=t_k$, respectively. Let $\widehat{\bf{fdr}}^{\mathrm{iso}}_{(k)}$ and $\mathcal{H}^{(k)}$ be the $k$th largest value and its corresponding null hypothesis. 
Following \citet{SunCai:2007},
we propose a decision rule that is step-up and rejects all hypotheses $\mathcal{H}^{(k)}$, $k=1,2,\ldots, u$, where
\begin{align}\label{eqn:decisionrule}
u=\max \Big\{j  ~\big|~(1/j)  \sum_{k=1}^j \widehat{\bf{fdr}}^{\mathrm{iso}}_{(k)} \le \alpha \Big\} \quad \mbox{and} \quad 
u=\max \Big\{j ~ \big|~ \widehat{\bf{Fdr}}^{\mathrm{iso}}_{(j)} \le \alpha \Big\}
\end{align}
for the local FDR and the tail FDR, respectively.

The above decision rule \eqref{eqn:decisionrule} suggests that it suffices to monotonize the tail region of the initial estimates $\widehat{\bf fdr}$ and $\widehat{\bf Fdr}$. It seems reasonable to monotonize these estimates outside of $[t_{\min}, t_{\max}]$ where $f_1(t)=0$ is assumed.

\section{Numerical Study}\label{sec:numerical}

In this section, we compare the performance of the monotonized FDR estimators in Section \ref{sec:monotone:estimation} to the unadjusted estimators numerically. 
The same numerical scheme as in \citet[Section 4]{Schwartzman:2008} is used for this study.

Consider two scenarios to generate sets of summarizing statistics from the two-group mixture model \eqref{eqn:model}: 
$T_i$s are independent random variables from the following mixture models
\begin{equation}  \nonumber
T_i \sim \left\{
\begin{array}{ll}
f_0 = N(0.2,1.2^2)&~ \text{w.~p.~~} p_0,  \\
f_1= N(3,1.2^2)&~ \text{w.~p.~~}p_1,
\end{array}
\right. \quad \text{or}\quad
 T_i \sim \left\{
\begin{array}{ll}
f_0 = 0.8 \chi^2(3) &~ \text{w.~p.~~} p_0,\\
f_1= \chi^2(3,3) &~\text{w.~p.~~}p_1,
\end{array}
\right.
\end{equation}
where $N(\mu,\sigma^2)$ denotes the normal density with mean $\mu$ and variance $\sigma^2$,  $a \chi^2(\nu)$ denotes the scaled chi-square distribution with $\nu$ degrees of freedom, 
and $\chi^2(\nu,\delta)$ denotes the non-central chi-square distribution with the non-centrality parameter $\delta$.
In the study, we assume $p_0=0.9$ and generate 100 data sets for each case. 
The fitting interval for estimating the
null density is set to be $[0.2-1.5,0.2+1.5]$ for the normal
case and $[0,4]$ for the chi-square case. In both cases, only the right tail is monotonized
using the simpler procedure \eqref{eqn:fdr-opt2}.

Figure \ref{fig:FDRplot} plots the \emph{average} of 100 fdr estimates and its $95\%$
``validity ranges'' of each method. The validity ranges 
are pointwise and computed with 2.5\% and 97.5\% quantiles of 100 fdr estimates at each $t$.
Figure \ref{fig:FDRplot} \subref{fig:FDRplot:normalfdr1} and \subref{fig:FDRplot:chisqfdr1} indicate
that the monotonized fdr estimates have smaller variance than their
unadjusted counterparts. In particular, the unadjusted
estimates for the chi-square case are quite volatile 
for large $t$ values, even after taking an average of the 100 estimates.
This volatility is substantially reduced after monotonization.
Furthermore, the validity range for the monotonized estimates is much
narrower than the unadjusted one.

Note that Figure \ref{fig:FDRplot} \subref{fig:FDRplot:normalfdr1} and \subref{fig:FDRplot:chisqfdr1} 
also indicate that the proposed monotonization reduces the the bias.
This phenomenon is interesting because smoothing as imposed by the monotonization procedure 
does not necessarily reduces the bias of the estimate. 
Our conjecture is that this is mainly due to the inverse-variance weighting used in the isotonic regression \eqref{eqn:fdr-opt2}.
This procedure imposes relatively small weights to  
the bins with small numbers of observations, whose unadjusted estimates for these bins are likely to be biased upward.

The unadjusted Fdr estimates are on average quite smooth and monotone as compared to their fdr counterparts.
In Figures \ref{fig:FDRplot} \subref{fig:FDRplot:normalFdr2} and \subref{fig:FDRplot:chisqFdr2}  the averages of 100 unadjusted and monotonized Fdr estimates are almost equal. 
Paying attention to the individual data set, however, 4 out of 100 data
sets result in unadjusted Fdr estimates in the normal case; 17 out of 100 result in unadjusted estimates in the chi-square case. 
If we limit our attention to these cases that do have unadjusted Fdr estimates, we observe that the isotonization step improves accuracy (Figure \ref{fig:FDRnonmonotone}).

\begin{figure}[htb!] 
\centering
    \subfigure[Normal fdr]{
    \includegraphics[width=.45\linewidth]{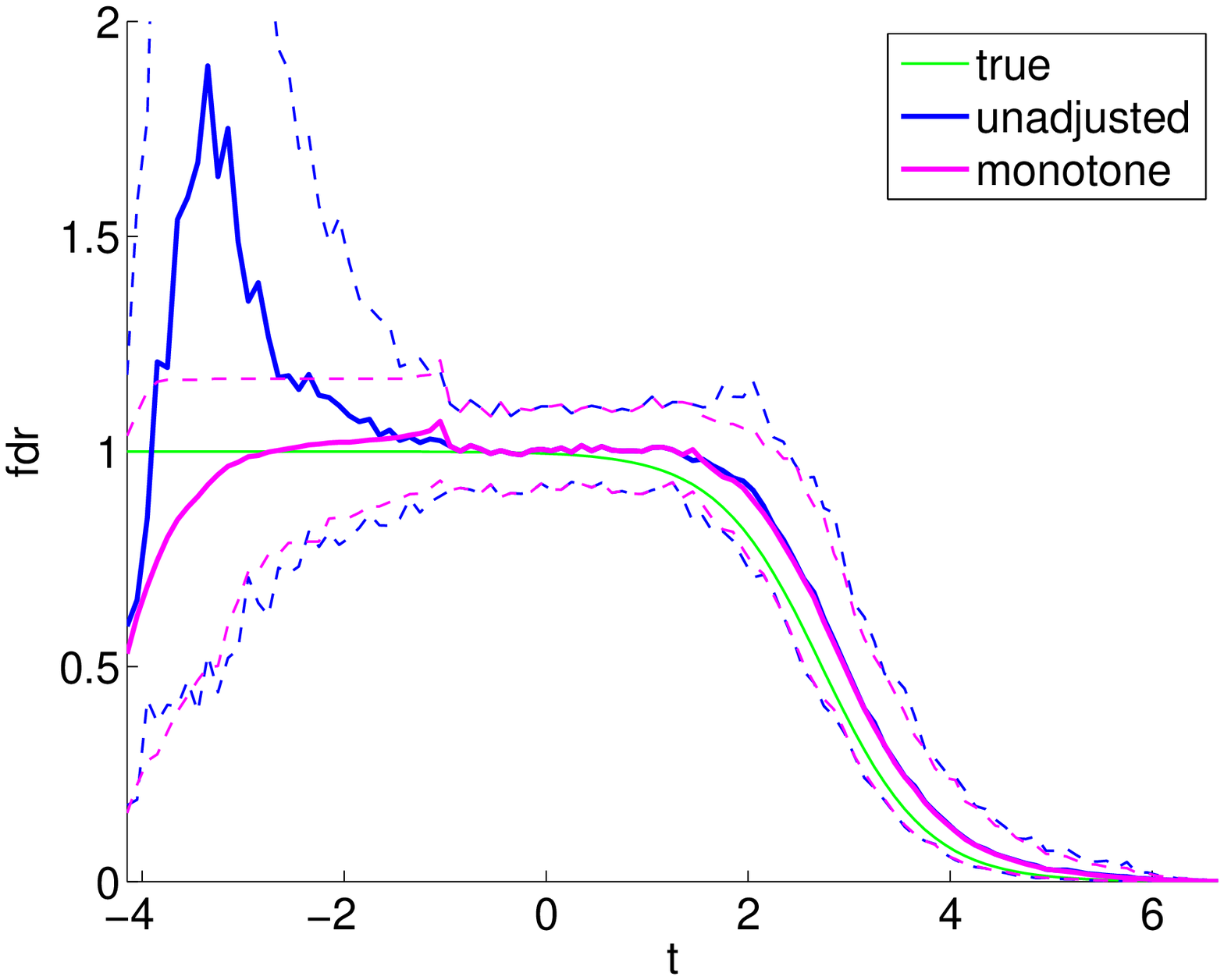}
    \label{fig:FDRplot:normalfdr1}
    }
    \subfigure[Normal Fdr]{
    \includegraphics[width=.45\linewidth]{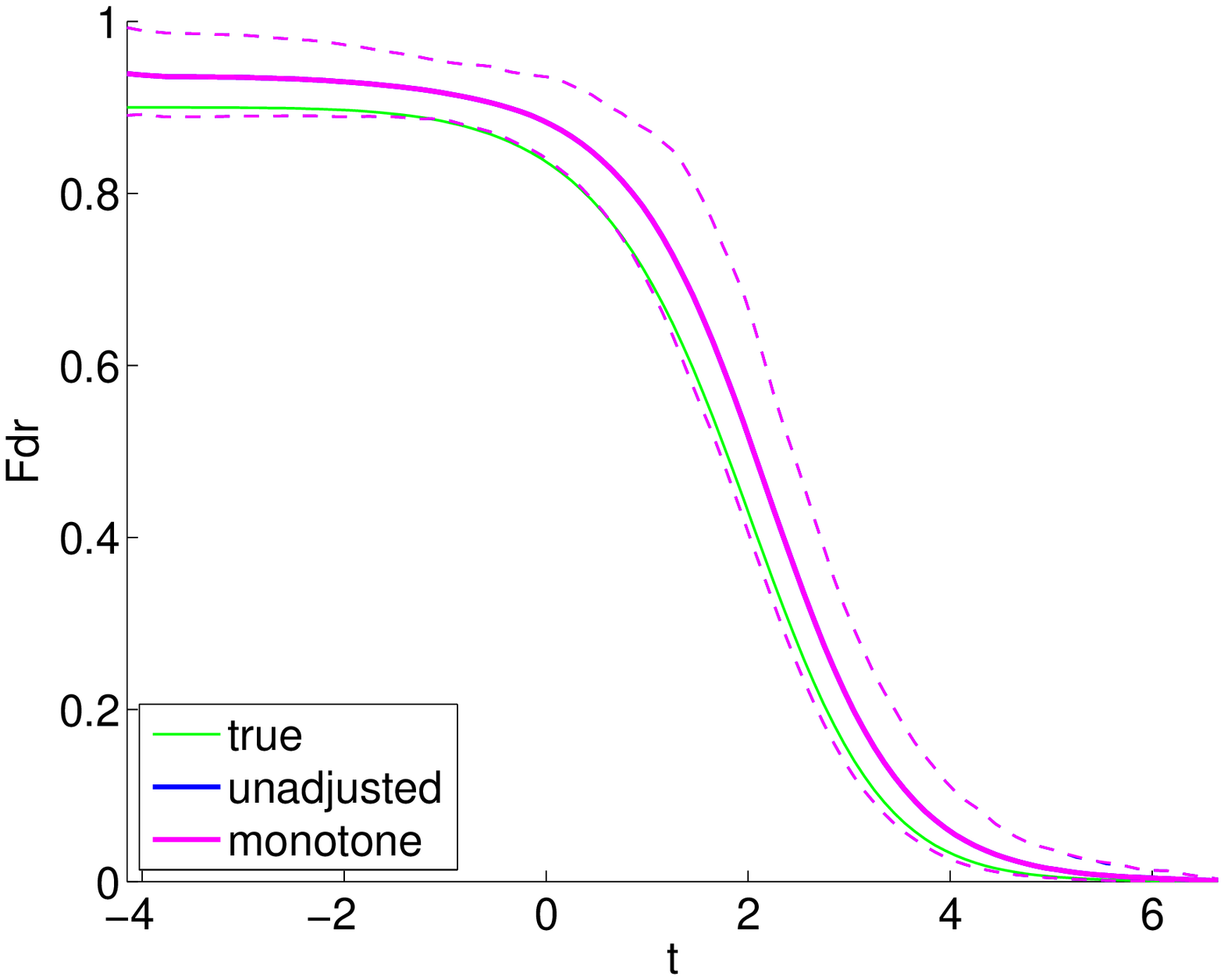}
    \label{fig:FDRplot:normalFdr2}
    }
    \subfigure[Chi-square fdr]{
    \includegraphics[width=.45\linewidth]{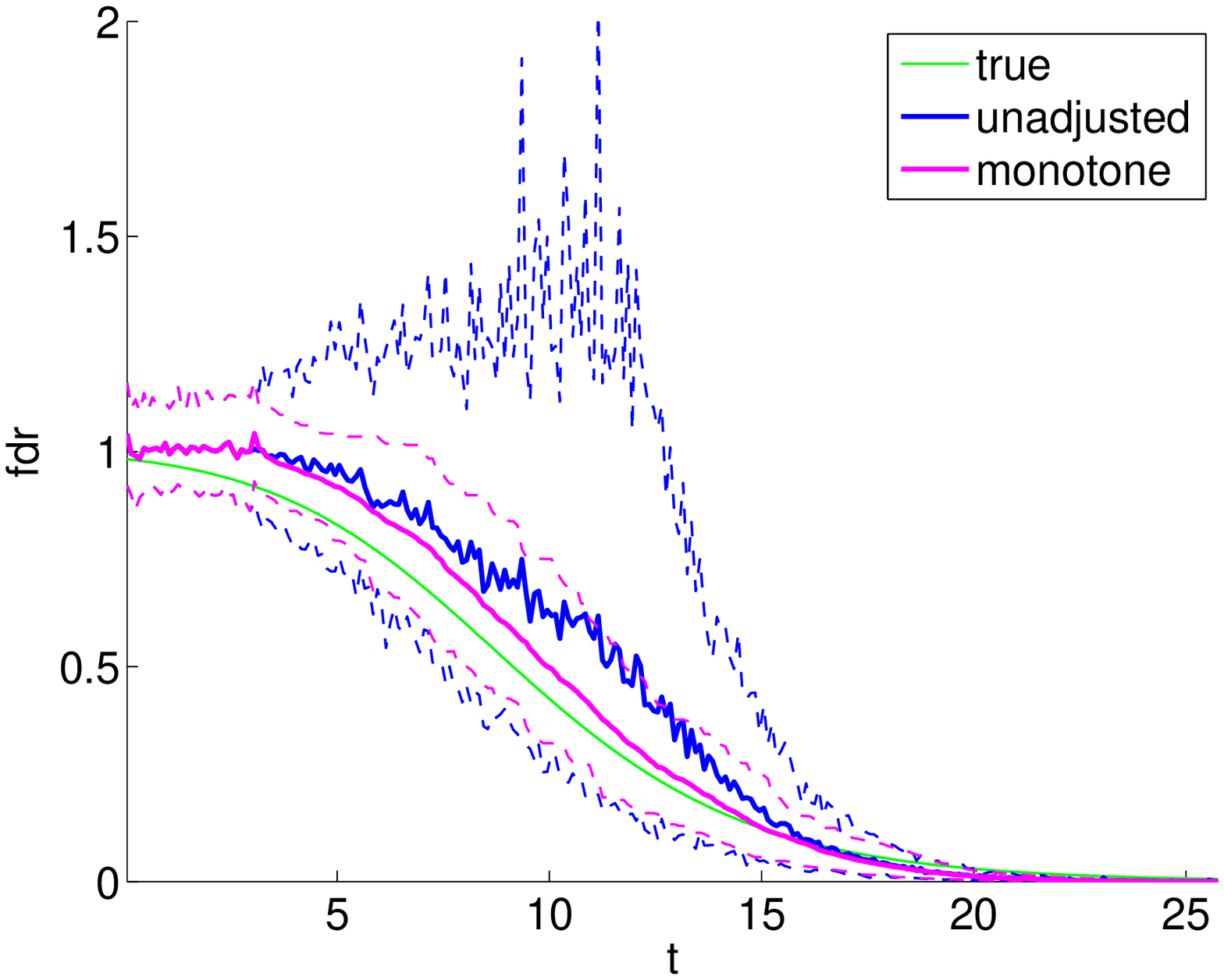}
    \label{fig:FDRplot:chisqfdr1}
    }
    \subfigure[Chi-square Fdr]{
    \includegraphics[width=.45\linewidth]{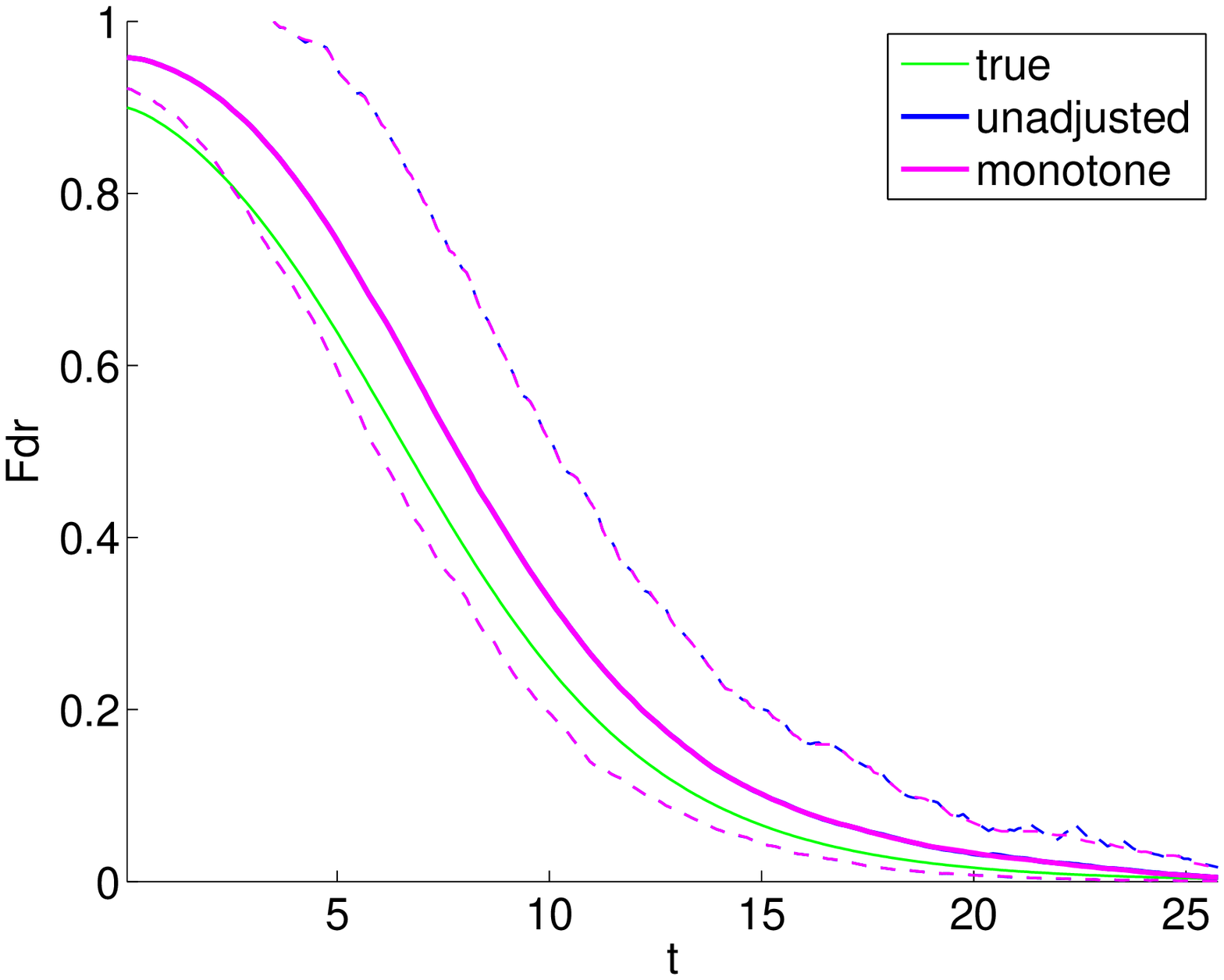}
    \label{fig:FDRplot:chisqFdr2}
    }
\caption{FDR bias and validity ranges. 
For panels (a) and (b), the unadjusted estimates are monotonized for $t>1.7$; for panels (c) and (d), $t>4.0$.
}
\label{fig:FDRplot}
\end{figure}

\begin{figure}[htb!] 
\centering
\includegraphics[width=.5\linewidth]{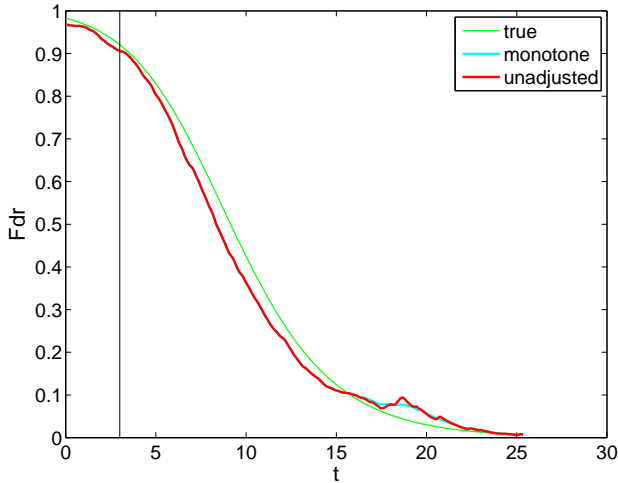}
\caption{
Plot of the averages of unadjusted Fdr estimates and their monotonized estimates for the chi-square scenario. 
The vertical line represent $t_0=4.0$, the boundary of monotonization.}
\label{fig:FDRnonmonotone}
\end{figure}

\section{Examples}\label{sec:examples}

In this section, we illustrate the merit of monotonization using the leukemia data by \citet{Golub:1999}, available from 
\url{http://www.broadinstitute.org/cancer/pub/all_aml/}.
This data set records the expression levels of patients with one of the two types of leukemia, acute lymphoblastic leukemia (ALL) or  acute myeloid leukemia (AML). The data set consists of two parts: training and test. 
The training data set is comprised of 38 arrays (ALL, 27; AML, 11). The test 
data set has 34 arrays (ALL, 20; AML, 14).  
In our analysis, we only used the training set, as in \citet{Broberg:2005}, to find differentially expressed genes (DEGs) between ALL and AML. Preprocessing was conducted according to the prescription due to \citet{Dudoit:2002}. The preprocessed data set was summarized as a $38 \times 3571$ matrix. (This date set is available from R package \texttt{multtest}).

We applied the mode matching procedure \citep{Schwartzman:2008} and monotonized the estimated fdrs and Fdrs using the method of Section \ref{sec:monotone:estimation} to find DEGs. 
For $g=1,2,\ldots,3571$, we computed two-sample $t$-statistics (with equal variance) $t_g$ as summarizing statistics, and transformed them to $z$-values $z_g= \Phi^{-1}\big(F_{36}(t_g) \big)$,
where $F_{36}(t)$ is the cumulative distribution function of the $t$-distribution with 36 degrees of freedom.
In applying the mode matching procedure, we chose the bin size $\Delta=0.05$ and the null region $\big[ -1.2, 1.2 \big]$ to estimate the empirical null distribution.  
We found the mode-matched estimates of the fdrs frequently violate the monotonicity, but the estimates of Fdrs did not need additional monotonization; fdrs are non-smooth and not monotone due to scarcity of observations in both tails. We only monotonized the fdrs outside the null region, i.e., those in the region $(-\infty, -1.2] \cup [1.2, \infty)$.  
Figure \ref{fig:estimation_leukemia} depicts the estimates of the local FDR and their monotonization.

\begin{figure}[htb!]
\centering
    \subfigure[Left tail]{
    \includegraphics[width=0.45\linewidth]{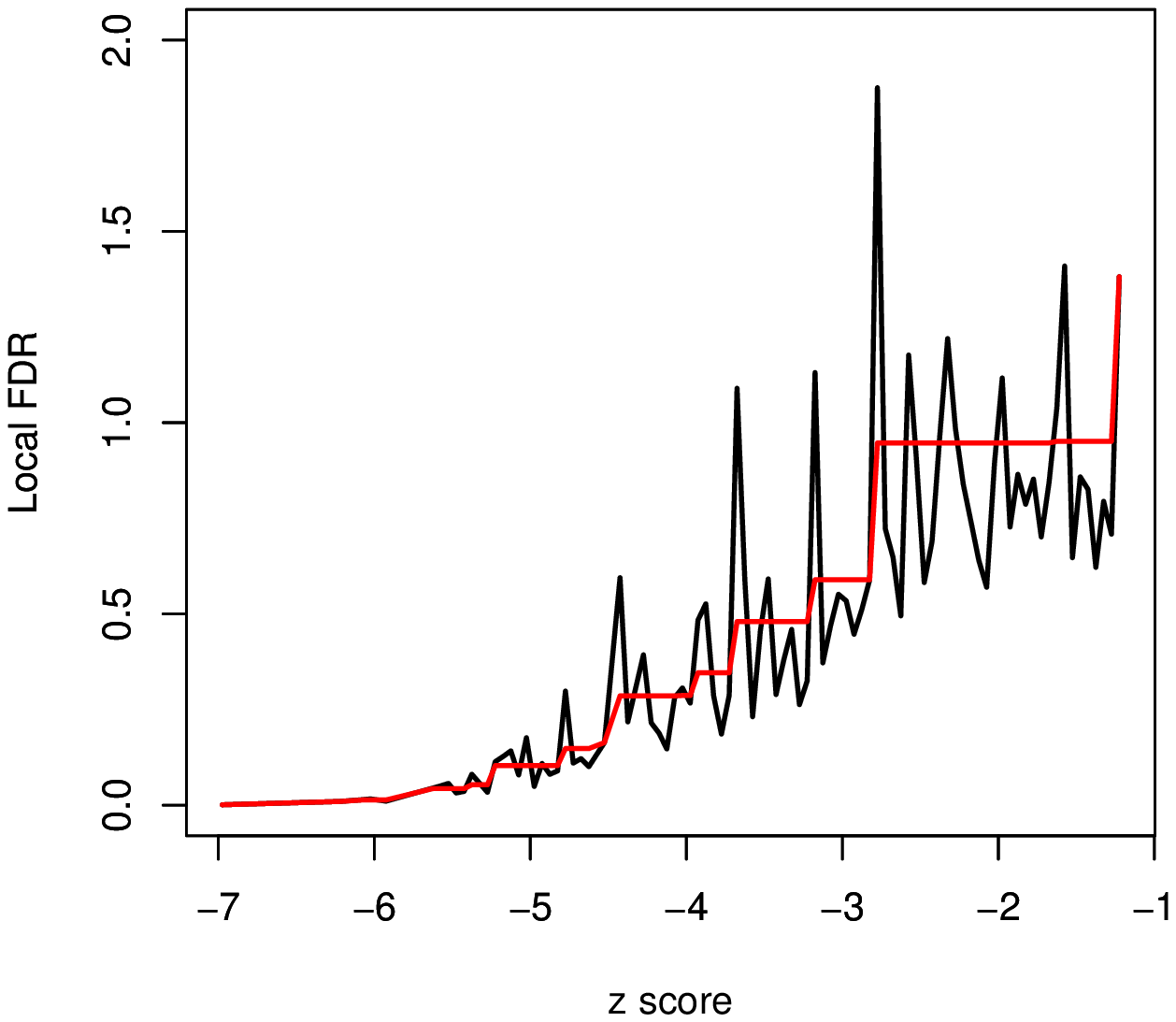}
    }
    \subfigure[Right tail]{
    \includegraphics[width=0.45\linewidth]{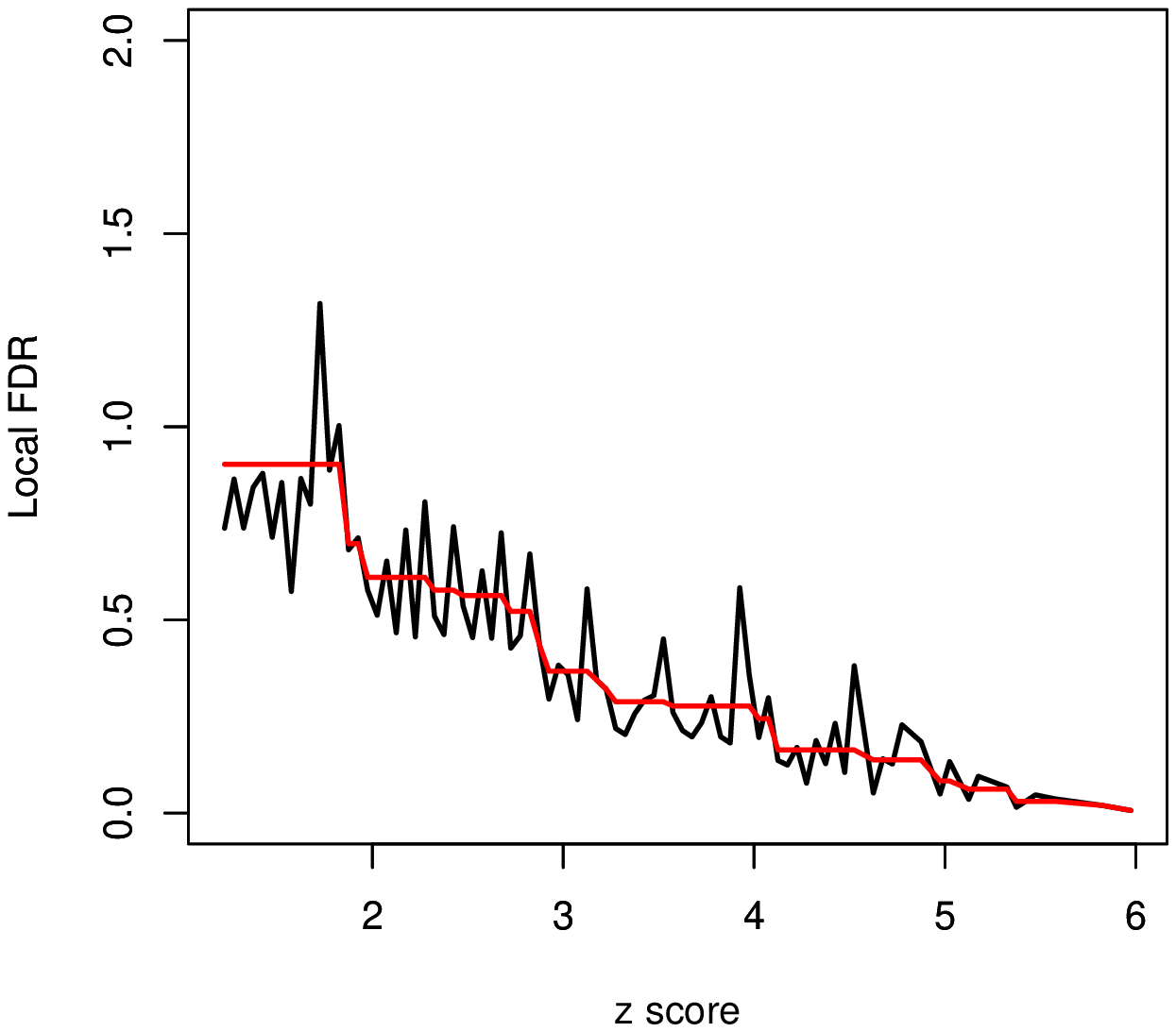}
    }
\label{fig:estimation_leukemia}
\caption{
Monotonized fdr estimates of leukemia data due to \citet{Golub:1999}. The spiky, black line represents the unadjusted estimates; the smooth, red line represents the monotonized local FDR estimates.}
\end{figure}

We then applied the adaptive decision rule of Section \ref{sec:monotone:decisionrule} to the unadjusted and the monotonized estimates of the local FDR to declare DEGs. We control the marginal FDR level at $\alpha=0.05, 0.1,$ and $0.15$. At level $\alpha=0.05$, we found 68 DEGs using the unadjusted fdr estimates (denoted by \texttt{unadj}) and 40 DEGs by using their monotonized modifications (denoted by \texttt{iso}).
Using HuGE Navigator version 2.0 databases \citep{Yu:2008}, we investigated biological relevance of the DEGs found by seeking AML/ALL-related
genes among those genes. Our data set has 2625 unique
genes (from the 3571 probes), among which the number of AML/ALL-related genes reported by HuGE navigator was 130
(4.4\% of the 2625 genes). Our monotonization removed 32 genes from the 68 DEGs that
\texttt{unadj} found, and introduced 4 new genes, one of which was AML/ALL-related; 
only 2 of the 32 DEGs removed were AML/ALL-related genes. 
In short, the percentage of AML/ALL-related genes in detected
DEGs increased from 8.82\% to 12.5\% by taking into account monotonicity in the local FDR. This observation indicates that the isotonization can reduce the number of false discoveries.
This observation is still valid for levels $\alpha=0.1$ and $0.15$, although the improvement due to isotonization becomes smaller as the level increases. 
These results are summarized in Table 1.

\begin{table}[htb!]
\caption{Biological relevance of detected DEGs}
\begin{center}
\begin{tabular}{|c|c|c|c|c|} \hline
$\alpha$ & FDR & \# of DEGs & \# of AML/ALL-related
& \% of AML/ALL-related \\ \hline
 \multirow{2}{*}{0.05} & fdr (\texttt{iso}) & 40 & 5 & 12.50\% \\
    					& fdr (\texttt{unadj}) & 68 & 6 & 8.82 \% \\ \hline 
 \multirow{2}{*}{0.1} & fdr (\texttt{iso}) & 125 & 9 & 7.20\% \\ 
						& fdr (\texttt{unadj}) & 177 & 11 & 6.21\% \\ \hline
 \multirow{2}{*}{0.15} & fdr (\texttt{iso}) & 232 & 14 & 6.03 \%\\  
						& fdr (\texttt{unadj}) & 362 & 21 & 5.80 \% \\ \hline

\end{tabular}
\end{center}
\end{table}

As another example, we analyzed the adenocarcinoma data set in \cite{Notterman2001}.
The data consist of 18 subjects for each of which 6579 gene expressions in the adenocarcinoma and normal colon samples are obtained and paired.
The results are reported in the supplementary material, and again indicate that monotonization can reduce the number of false discoveries.

\section{Conclusion}\label{sec:conclusion}

We have considered monotonicity in the FDR and proposed an estimation procedure thereof. 
The proposed procedure is a simple modification of the empirical Bayes estimator using generalized isotonic regression. 
The presented numerical study shows that imposing monotonicity improves the estimation in both bias and variance. Through real-world data sets, it is demonstrated that the proposed monotone FDR procedure can reduce the number of false discoveries.

Monotonicity in the FDR has several attractive features: monotone local FDR implies optimality in controlling the tail FDR, and monotonized estimates perform better than their unadjusted counterparts in practice. The latter may be due to that imposing smoothness (via requiring monotonicity) improves estimation as in many non-parametric regression problems.



\bibliographystyle{elsarticle-harv}

\end{document}